\documentclass{theoretics}

\addbibresource{reflist.bib}

\usepackage{amsfonts}

\newcommand\calS{{\mathcal{S}}}
\newcommand\calA{{\mathcal{A}}}
\newcommand\calG{{\mathcal{G}}}

\newcommand{\R}{{\mathbb{R}}}
\newcommand{\Q}{{\mathbb{Q}}}
\newcommand\ceil[1]{{\lceil#1\rceil}}
\newcommand{\lp}{\mathrm{lp}}

\DeclareMathOperator{\E}{\mathbb{E}}

\title{A Note on Approximating Weighted Nash Social Welfare with Additive Valuations}

\ThCSauthor[Nanjing,NCL]{Yuda Feng}{yudafeng@smail.nju.edu.cn}[0009-0004-7461-2285]

\ThCSauthor[Nanjing,NCL]{Shi Li}{shili@nju.edu.cn}[0000-0001-9140-9415]
\ThCSaffil[Nanjing]{School of Computer Science, Nanjing University, Nanjing, Jiangsu, China}  
\ThCSaffil[NCL]{New Cornerstone Science Laboratory}

\ThCSshortnames{Y. Feng, S. Li}  
\ThCSshorttitle{Approximating Weighted Nash Social Welfare with Additive Valuations}
\ThCSyear{2025}
\ThCSarticlenum{17}
\ThCSreceived{Oct 30, 2024}
\ThCSrevised{May 27, 2025}
\ThCSaccepted{Jun 4, 2025}
\ThCSpublished{Aug 14, 2025}
\ThCSdoicreatedtrue
\ThCSkeywords{Nash Social Welfare, Configuration LP, Approximation Algorithms}

\ThCSthanks{This paper was invited from ICALP 2024 \cite{Feng024}. The work of YF and SL was supported by the State Key Laboratory for Novel Software Technology, and the New Cornerstone Science Laboratory.}

\begin{document}

\maketitle

\begin{abstract}
    We give the first $O(1)$-approximation for the weighted Nash Social Welfare problem with additive valuations. The approximation ratio we obtain is $e^{1/e} + \epsilon \approx 1.445 + \epsilon$, which matches the best known approximation ratio for the unweighted case \cite{BKV18}.  
    
    Both our algorithm and analysis are simple. We solve a natural configuration LP for the problem, and obtain the allocation of items to agents using the Shmoys-Tardos rounding algorithm developed for  unrelated machine scheduling problems \cite{ST93}. In the analysis, we show that the approximation ratio of the algorithm is at most the worst gap between the Nash social welfare of the optimum allocation and that of an EF1 allocation, for an unweighted Nash Social Welfare instance with identical additive valuations. This was shown to be at most $e^{1/e} \approx 1.445$ by Barman, Krishnamurthy and Vaish\ \cite{BKV18}, leading to our approximation ratio. 
\end{abstract}

\section{Introduction}

In the weighted (or asymmetric) Nash Social Welfare problem with additive valuations, we are given a set $\calA$ of $n$ agents, and a set $\calG$ of $m$ indivisible items. Every agent $i \in \calA$ has a weight $w_i \geq 0$ such that $\sum_{i \in \calA}w_i = 1$. There is a value $v_{ij} \in \R_{\geq 0}$ for every $i \in \calA$ and $j \in \calG$.  The goal of the problem is to find an allocation $\sigma: \calG \to \calA$ of items to agents so as to maximize the following weighted Nash Social Welfare of $\sigma$: 
\begin{equation*}
    \prod_{i \in \calA} \bigg(\sum_{j \in \sigma^{-1}(i)}v_{ij}\bigg)^{w_i}.
\end{equation*}
In the case where all $w_i$'s are equal to $\frac1n$, we call the problem the unweighted (or symmetric) Nash Social Welfare problem.

Allocating resources in a fair and efficient manner among multiple agents is a fundamental problem in computer science, game theory, and economics, with applications across diverse domains \cite{KN79, You94, BT96, RW98, Mou04, BT05,  Rot15, BCE16}. The weighted Nash Social Welfare function is a notable objective that balances efficiency and fairness. The unweighted (or symmetric) objective was independently proposed by different communities \cite{Nas50, Kel97, Var74}, and later the study has been extended to the weighted case \cite{HS72, Kal77}. Since then it has been used in a wide range of applications, including bargaining theory \cite{LV07, CM10, Tho86}, water allocation \cite{HLZ13, DHY18}, and climate agreements \cite{YvIW17}.

The unweighted Nash Social Welfare problem with additive valuations is proved to be NP-hard by Nguyen, Nguyen, Roos and Rothe \cite{NNRR12}, and APX-hard by Lee \cite{Lee15}.  Later the hardness of approximation was improved to $\sqrt{8/7} \approx 1.069$ by Garg, Hoefer and Mehlhorn \cite{GHM23}, via a reduction from Max-E3-Lin-2.

On the positive side, Cole and Gkatzelis \cite{CG15} gave a $(2e^{1/e} + \epsilon\approx 2.889 + \epsilon)$-approximation using a spending restricted market equilibrium relaxation. The ratio was improved by Cole, Devanur, Gkatzelis, Jain, Mai, Vazirani and Yazdanbod \cite{CDG17} to $2$ using a tight analysis, and by Anari, Oveis Gharan, Saberi and Singh \cite{AGS17} to $e$ via a connection of the problem to real stable polynomials. Both papers formulated convex programming (CP) relaxations of the problem. In particular, \cite{CDG17} showed that the optimum solution to their CP corresponds to the spending-restricted market equilibrium defined in \cite{CG15}.  The state-of-the-art result for the problem is a combinatorial $(e^{1/e}+\epsilon\approx 1.445 + \epsilon)$-approximation algorithm due to Barman, Krishnamurthy and Vaish \cite{BKV18}. They showed that when all the valuations of agents are identical, any allocation that is envy-free up to one item (EF1) is $e^{1/e}$-approximate.  Their approximation result then follows from a connection between the non-identical and identical valuation settings they established. 

All the results discussed above are for the unweighted case.  For the weighted case with agent weights $w \in [0, 1]^\calA, |w|_1 = 1$, Brown, Laddha, Pittu and Singh \cite{BLP24} presented a $5 \cdot \exp(2 \cdot D_{\text{KL}}\Big(w || \frac{\vec{1}}{n})\Big) = 5\cdot \exp(2\log n + 2\sum_{i \in \calA} w_i \log w_i)$ approximation algorithm, where $D_{\text{KL}}$ denotes the KL divergence of two distributions. Their result is based on the CP from \cite{CDG17}, generalized to the weighted setting. 
\smallskip

The additive valuation setting is a special case of the submodular valuation setting, which is another important setting studied in the literature. In this setting, instead of a $v_{ij}$ value for every $ij$ pair, we are given a monotone submodular function $v_i: 2^{\calG} \to \R_{\geq 0}$ for every agent $i \in \calA$. Till the end of the section, our goal is to find an allocation $\sigma: \calG \to \calA$ so as to maximize $\prod_{i \in \calA} \Big(v_i(\sigma^{-1}(i))\Big)^{w_i}$. 
A bulk of the previous work has focused on the unweighted case; that is, $w_i = \frac1n$ for all $i \in \calA$. For this case, \cite{GKK23} proved a hardness of $e/(e-1)\approx 1.5819$ using a reduction from Max-3-Coloring; this is better than the $1.069$ hardness for the additive valuation case. 

On the positive side, Li and Vondr\'{a}k \cite{LV21} extended the techniques of \cite{AGS17}, to obtain an $e^3/(e-1)^2$-approximation algorithm for the unweighted Nash Social Welfare problem for a large family of submodular valuations, including coverage functions and linear combinations of matroid rank functions. Later, Garg, Husi\'{c}, and V\'{e}gh \cite{GHV21} considered a family of submodular functions called Rado functions, and gave an $O(1)$-approximation for this family using the matching theory and convex program techniques.  Li and Vondr\'{a}k \cite{LV22} developped the first $O(1)$-approximation for general submodular functions, with an approximation ratio of $380$.  Recently, Garg, Husi\'{c}, Li, V\'{e}gh and Vondr\'{a}k \cite{GHL23} presented an elegant $4$-approximation local search algorithm for the problem, which is the current best approximation result for the problem.  All the results discussed above are for the unweighted case. For the weighted case, \cite{GHL23} gave an $O(nw_{\max})$-approximation, where $w_{\max} = \max_{i \in \calA} w_i$. 

Using our configuration LP idea, Feng, Hu, Li and Zhang \cite{FHLZ25} recently developed a $(233+\epsilon)$-approximation for the weighted case, which is the first $O(1)$-approximation for the weighted Nash Social Welfare problem with submodular valuations.

Recently, the problem has been studied in an even more general setting, namely, the subadditive valuation setting.  Dobzinski, Li, Rubinstein and Vondr\'{a}k \cite{DLR23} gave an $O(1)$-approximation for the unweighted Nash Social Welfare problem in this setting under the demand oracle model.

\subsection{Our Result and Techniques} 

In this note, we give the first $O(1)$-approximation algorithm for the weighted Nash Social Welfare problem with additive valuations:
\begin{theorem}
    \label{thm:main}
    For any $\epsilon > 0$, there is a deterministic $(e^{1/e} + \epsilon \approx 1.445 + \epsilon)$-approximation algorithm for the weighted Nash Social Welfare problem with additive valuations, with running time polynomial in the size of the input and $\frac1\epsilon$.
\end{theorem}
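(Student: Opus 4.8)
The plan is to solve a configuration LP whose objective is the logarithm of the weighted Nash welfare, round its (marginalized) solution by a randomized version of the Shmoys--Tardos assignment rounding \cite{ST93}, and then charge the loss of the rounding agent by agent to the Nash-welfare gap between an optimal allocation and an EF1 allocation of an unweighted, identical-valuation instance, which is at most $e^{1/e}$ by \cite{BKV18}. \textbf{The configuration LP.} For each agent $i\in\calA$ and each bundle $S\subseteq\calG$ with $v_i(S):=\sum_{j\in S}v_{ij}>0$ I introduce a variable $x_{i,S}\ge 0$; the constraints are $\sum_S x_{i,S}=1$ for every $i$ and $\sum_{i}\sum_{S\ni j}x_{i,S}\le 1$ for every $j$, and the objective is to maximize $\sum_{i\in\calA} w_i\sum_S x_{i,S}\ln v_i(S)$. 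Every integral allocation $\sigma$ is feasible with objective $\ln\big(\prod_i v_i(\sigma^{-1}(i))^{w_i}\big)$, so after disposing of the degenerate case where the optimum is $0$, the LP optimum is at least $\ln\mathrm{OPT}$, and writing $L_i$ for $\sum_S x_{i,S}\ln v_i(S)$ at an optimal $x$ we get $\prod_i e^{w_iL_i}\ge\mathrm{OPT}$. Since there are exponentially many variables, I would solve the LP to within an additive $\epsilon$ by the ellipsoid method on the dual, whose separation oracle must, for each agent $i$ and given item prices $z_j\ge 0$, find the bundle maximizing $w_i\ln v_i(S)-\sum_{j\in S}z_j$; after guessing $v_i(S)$ up to a $1+\epsilon$ factor this is a minimum-price coverage problem solved by a greedy / knapsack FPTAS. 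This is the only place the $\epsilon$ is lost.

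\textbf{The rounding.} From an (approximately) optimal $x$ set $y_{ij}:=\sum_{S\ni j}x_{i,S}$, so $\sum_i y_{ij}\le 1$ and, since every bundle in the support is non-empty, $\sum_j y_{ij}\ge 1$ for each $i$. Following Shmoys--Tardos, for each $i$ I sort the items by non-increasing $v_{ij}$ and cut $(y_{ij})_j$ into consecutive groups $G_{i,1},G_{i,2},\dots,G_{i,t_i}$ of $y$-mass exactly $1$ (an item may be split between two consecutive groups; the last group has mass at most $1$ and is padded with a dummy item of value $0$). The bipartite graph between items-plus-dummies and slots $(i,k)$, where $\{j,(i,k)\}$ carries the mass of $j$ placed in $G_{i,k}$, is a fractional perfect matching, which I round to a random integral perfect matching $M$ with $\Pr[\{j,(i,k)\}\in M]$ equal to the fractional value (possible since the bipartite matching polytope is integral, e.g.\ by dependent rounding). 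Agent $i$ then receives $B_i:=\{\,\text{item matched to }(i,k):1\le k\le t_i\,\}$, so it gets at most one item per group and $\Pr[j\in B_i]=y_{ij}$.

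\textbf{The analysis.} Fix $i$, and regard its $t_i$ slots as $t_i$ agents sharing the common additive valuation $v_{i\cdot}$. Every outcome of $M$ gives each such agent exactly one item, hence is (trivially) an EF1 allocation of this unweighted identical-valuation sub-instance, while the restriction of $x$ to agent $i$ together with the grouping witnesses that the optimum Nash welfare of the sub-instance is large relative to $e^{L_i}$. By \cite{BKV18} the Nash welfare of the rounded sub-allocation is within $e^{1/e}$ of that optimum, and the AM--GM inequality converts the (multiplicative) sub-instance Nash welfare into the additive quantity $v_i(B_i)$; carried out carefully this gives $\E[\ln v_i(B_i)]\ge L_i-\tfrac1e$ for every $i$. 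Summing with the weights, $\E\big[\sum_i w_i\ln v_i(B_i)\big]\ge\sum_i w_iL_i-\tfrac1e\ge\ln\mathrm{OPT}-\tfrac1e$, and since $\E[e^X]\ge e^{\E[X]}$ we get $\E[\mathrm{NSW}(B)]\ge e^{-1/e}\,\mathrm{OPT}$; as $\mathrm{NSW}(B)\le\mathrm{OPT}$ always, a standard boosting (running the algorithm $\mathrm{poly}(1/\epsilon)$ times and keeping the best output) together with absorbing the LP error yields Theorem~\ref{thm:main}.

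\textbf{Main obstacle.} The delicate point is the per-agent reduction above: one must set up the derived identical-valuation, unweighted instance and its EF1 allocation so that the multiplicative Nash welfare of the sub-instance, the additive bundle value $v_i(B_i)$, and the LP quantity $L_i$ line up — the AM--GM step and the precise form of the statement ``the sub-instance optimum exceeds $e^{L_i}$'' are what make this go through — and so that split items, the under-full last group, and the dummies do not leak value. Only once this is arranged does the black-box bound of \cite{BKV18} apply and combine cleanly across agents.
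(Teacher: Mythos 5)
Your overall skeleton matches the paper: the same configuration LP, the same Shmoys--Tardos grouping and marginal-preserving randomized matching, the per-agent target inequality $\E[\ln v_i(B_i)]\ge L_i-\frac1e$, and the final convexity/summation step. However, the one step you yourself flag as the ``main obstacle'' --- the per-agent reduction to an identical-valuation EF1 instance --- is exactly where your proposed route breaks, and it is not a presentational gap but a substantive one. Treating the $t_i$ slots of agent $i$ as $t_i$ agents who each receive a single item makes EF1 trivially true, but it also makes the sub-instance Nash welfare a geometric mean of \emph{single-item} values, $\bigl(\prod_k v_{ij_k}\bigr)^{1/t_i}$, while $L_i=\sum_S x_{i,S}\ln v_i(S)$ is built from \emph{bundle sums}. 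AM--GM only gives $v_i(B_i)\ge t_i\bigl(\prod_k v_{ij_k}\bigr)^{1/t_i}$, i.e.\ it lets you pass from the slot-level geometric mean up to the bundle value, but you would also need the reverse comparison ``slot-instance optimum $\gtrsim e^{L_i}/t_i$,'' and that is the wrong direction of AM--GM: the geometric mean across groups can be arbitrarily smaller than the arithmetic mean. Concretely, if agent $i$ fractionally receives one item of value $V=2^{t}$ and $t-1$ items of value $1$ (all integrally, so the rounding is deterministic), then $e^{L_i}=V+t-1$ while $t_i\bigl(\prod_k v_{ij_k}\bigr)^{1/t_i}=2t$, so no bound of the form ``slot Nash welfare is within $e^{1/e}$ of the slot optimum'' can recover $\E[\ln v_i(B_i)]\ge L_i-\frac1e$ through this intermediate quantity. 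The dummy item of value $0$ in the last group is a second symptom of the same problem: a slot that receives the dummy drives the slot-level product (and its logarithm) to $-\infty$, whereas losing one item from a bundle sum is harmless.

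The paper's construction sidesteps both issues by keeping bundles intact: it builds an unweighted instance with $\Delta$ identical copies of agent $i$ and $\Delta x_{ij}$ copies of each item, and compares two allocations of \emph{whole bundles} --- one in which $\Delta y_{i,S}$ copies receive $S$ (Nash welfare $\prod_S v_i(S)^{y_{i,S}}$, which is exactly $e^{L_i}$), and one in which $\Delta\Pr[S_i=S]$ copies receive $S$ (Nash welfare $\exp(\E[\ln v_i(S_i)])$). The EF1 property of the latter is then \emph{not} trivial: it follows from the structural fact that every bundle in the support of $S_i$ takes exactly one item from each group except possibly the last, and groups are sorted by value, so deleting the most valuable item of one bundle makes it dominated item-by-item by any other bundle in the support. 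Theorem~\ref{thm:EF1} applied to that instance gives the $e^{-1/e}$ loss directly between the two bundle-level quantities, with no AM--GM conversion and no dummy items. To complete your argument you would need to replace your slot-level sub-instance by this bundle-level one (or supply some other argument bridging group-wise products and bundle sums); as written, the key lemma is asserted but not proved, so the proposal does not yet constitute a proof of Theorem~\ref{thm:main}.
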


Our approximation ratio of $e^{1/e} + \epsilon$ matches the best ratio for the unweighted case due to Barman, Krishnamurthy and Vaish \cite{BKV18}.  In contrast, the ratio given by \cite{BLP24} is $5\cdot \exp(2 \cdot D_{\text{KL}}(w || \frac{\vec{1}}{n}))$, which could be polynomial in $n$. 

Our algorithm is based on a natural configuration LP for the problem, which has not been studied before to the best of our knowledge.  The configuration LP contains a $y_{i, S}$ variable for every agent $i$ and subset $S$ of items, indicating if the set of items $i$ gets is $S$ or not. We show that the configuration LP can be solved in polynomial time to any precision, despite having an exponential number of variables.  Once we obtain the LP solution, we define $x_{ij}$ for every $i \in \calA$ and $j \in \calG$ to be the fraction of $j$ assigned to $i$. 

To round $x$ into an integral solution, it is more convenient to focus on a randomized algorithm first, which uses the Shmoys-Tardos rounding algorithm \cite{ST93} developed for unrelated machine scheduling problems. 
For every agent $i$, we break the fractional items assigned to $i$ into groups from the most valuable to the least, each containing 1 fractional item. The rounding algorithm maintains marginal probabilities, and the requirement that $i$ gets exactly one item from each group (except for the last one, from which $i$ gets at most one item).  In the analysis for each agent $i$, we construct an instance of the unweighted Nash Social Welfare problem with \emph{identical} additive valuations, that involves many copies of the agent $i$, along with two allocations $\calS$ and $\calS'$ to the instance. $\calS$ corresponds to the LP solution, and $\calS'$ corresponds to the randomized solution given by the rounding algorithm. Thanks to the condition that every group contains one item, the solution $\calS'$ is envy-free up to one item (EF1), defined as follows.
    \begin{definition}
        Given an instance of the unweighted Nash Social Welfare problem with agents $\calA$, items $\calG$, and identical additive valuation $v: \calG \to \R_{\geq 0}$ for all agents, an allocation $\sigma: \calG \to \calA$ is said to be envy-free up to one item (EF1),  if for every two distinct agents $i, i'$ with $\sigma^{-1}(i') \neq \emptyset$, there exists some $j \in \sigma^{-1}(i')$, such that $v(\sigma^{-1}(i') \setminus j) \leq v(\sigma^{-1}(i))$.
    \end{definition}

Barman, Krishnamurthy and Vaish \cite{BKV18} showed the following result on the quality of EF1 allocations:
    \begin{theorem}[\cite{BKV18}]
        \label{thm:EF1}
        For the unweighted Nash Social Welfare problem with identical additive valuations, any EF1-allocation is an $e^{1/e}$-approximate solution. 
    \end{theorem}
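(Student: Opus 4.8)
The plan is to fix an EF1 allocation $\sigma$ together with an NSW-maximal allocation $\tau$ and to prove $\prod_i v(\sigma^{-1}(i)) \ge e^{-n/e}\prod_i v(\tau^{-1}(i))$, after which taking $n$-th roots yields the claim. First I would dispose of the degenerate case: if the optimal NSW is $0$ there is nothing to prove; otherwise some allocation gives every agent positive value, which requires at least $n$ items of positive value, and a one-line application of the EF1 condition between an agent with an empty bundle and each other agent shows that no EF1 allocation can leave a bundle empty. So we may assume all bundle values of $\sigma$ are positive; write the sorted values as $v_1 \ge \dots \ge v_n > 0$ for $\sigma$ and $\nu_1 \ge \dots \ge \nu_n$ for $\tau$, let $c_1 \ge c_2 \ge \dots$ be the item values and $V$ the total value, and set $\mu := v_n$ and $\mu^{*} := \nu_n$.

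Next I would extract the two structural facts that drive the argument. From EF1, applying the condition with the poorest $\sigma$-agent in the role of the envier, every $\sigma$-bundle $i$ satisfies $v_i \le \mu + h_i$, where $h_i$ is the value of its most valuable item (removing that item leaves at most $\mu$). From the optimality of $\tau$, no single item $j \in \tau^{-1}(i)$ can be moved to a bundle $i'$ with $\nu_{i'} < \nu_i$ while strictly increasing the product, so $(\nu_i - c_j)(\nu_{i'} + c_j) \le \nu_i \nu_{i'}$, which rearranges to $\nu_i \le \nu_{i'} + c_j$ for every item $j$ of bundle $i$; comparing the bundle containing an item of value $> \mu^{*}$ against the poorest bundle then forces every such item to be a singleton in $\tau$. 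Call these items \emph{large}, call the others \emph{small}, let $M$ be the total value of the small items, and let $p$ be the number of $\tau$-bundles that are not large singletons. Then $\tau$ places the $n - p$ large items as singletons and splits all of $M$ among its remaining $p$ bundles, so AM--GM gives $\mathrm{NSW}(\tau)^n \le \big(\prod_{\text{large }j} c_j\big)\,(M/p)^p$.

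The core of the proof is the matching lower bound $\mathrm{NSW}(\sigma)^n \ge e^{-n/e}\big(\prod_{\text{large }j} c_j\big)\,(M/p)^p$. The point is that in $\sigma$ the EF1 inequality $v_i \le \mu + h_i$ limits how much small-item mass can pad a bundle that contains a large item: such a bundle is already worth at least that large item's value, and only the at-most-$\mu$ worth of small items padding it is taken from the pool available to equalize the other bundles. Since $\mu$ is itself the minimum bundle value, the extreme case is when this padding brings everything down to $\mu = M/n$, so in the worst case $\mathrm{NSW}(\sigma)^n \ge \big(\prod_{\text{large }j} c_j\big)\,(M/n)^p$, and hence $\mathrm{NSW}(\tau)^n/\mathrm{NSW}(\sigma)^n \le (M/p)^p/(M/n)^p = (n/p)^p$. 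Writing $x := n/p \ge 1$, the per-agent loss $\mathrm{NSW}(\tau)/\mathrm{NSW}(\sigma)$ is at most $x^{1/x}$, and the proof closes with the elementary fact $\max_{x>0} x^{1/x} = e^{1/e}$ (attained at $x = e$, i.e.\ when a $1/e$-fraction of the agents share the small items) --- which is exactly where the constant $e^{1/e}$ comes from.

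The step I expect to be the main obstacle is making the last two paragraphs rigorous, since ``large'' and ``small'' are defined through $\mu^{*} = \nu_n$ (a parameter of $\tau$) while the EF1 bounds available for $\sigma$ are phrased through $\mu = v_n$, and these thresholds need not coincide. One must handle items of value between $\mu$ and $\mu^{*}$, the possibility that $\sigma$ groups two large items in one bundle (which EF1 permits only when the smaller of the two has value at most $\mu$), and the fact that in $\sigma$ a large item may be padded with up to $\mu$ worth of small items, so the split of the mass $M$ between the ``large side'' and the ``small side'' differs in $\sigma$ and in $\tau$. I would deal with this by an exchange/charging argument --- or a careful induction that peels off the largest item at each step while carrying a deficit --- reducing an arbitrary pair $(\sigma, \tau)$ to the extremal configuration of $n - p$ equal huge singletons plus $p$ equal small bundles, on which the ratio is exactly $(n/p)^{p/n}$.
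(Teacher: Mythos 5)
First, a caveat: the paper does not prove Theorem~\ref{thm:EF1} at all --- it is imported verbatim from \cite{BKV18} --- so there is no in-paper proof to compare against, and your attempt must be judged on its own. Your preparatory steps are correct and are indeed the standard ingredients of the known argument: EF1 tested against the poorest bundle gives $v_i \le \mu + h_i$; the local-exchange inequality $\nu_i \le \nu_{i'} + c_j$ at the optimum shows every item of value exceeding $\mu^*$ is (up to zero-value items) a singleton there; AM--GM gives $\mathrm{NSW}(\tau)^n \le \big(\prod_{\text{large }j} c_j\big)(M/p)^p$; and $\max_{x>0} x^{1/x} = e^{1/e}$ is where the constant comes from. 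The extremal family you describe (roughly an $n/e$ fraction of agents sharing the small mass, the rest holding huge singletons padded up to $\mu$) is also the right tight example.

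However, the step you yourself flag as the ``main obstacle'' is not merely unrigorous --- the inequality you reduce everything to is false, so the proposed extremal-configuration reduction cannot be patched as stated. You claim $\mathrm{NSW}(\sigma)^n \ge \big(\prod_{\text{large }j} c_j\big)(M/n)^p$ and hence $\mathrm{NSW}(\tau)^n/\mathrm{NSW}(\sigma)^n \le (n/p)^p$. Take $n=2$ and item values $10,10,1,1$. The optimum $\tau$ is $\{10,1\},\{10,1\}$ with product $121$ and $\mu^* = 11$, so there are no large items, $M = 22$, $p = n = 2$, and your bound would assert that every EF1 allocation has product at least $(M/n)^2 = 121$, i.e.\ is exactly optimal. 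But $\sigma = \{10\},\{10,1,1\}$ is EF1 (removing the item of value $10$ from the richer bundle leaves value $2 \le 10$) and has product $120$. So both the intermediate bound and the derived bound $\mathrm{ratio}^n \le (n/p)^p$ fail: here $\mathrm{ratio}^n = 121/120 > 1 = (n/p)^p$. The underlying heuristic $\mu \ge M/n$ is simply not true, and the loss of an EF1 allocation does not come only from small mass being siphoned onto large-singleton bundles; bundles containing no large item can also over- or undershoot the average $M/p$ by up to one item's value (up to $\mu^*$), and these deviations must be absorbed too. Showing that the aggregate multiplicative loss from all of these effects still stays below $e^{1/e}$ is exactly the nontrivial content of the proof in \cite{BKV18}, and it is the part your exchange/charging or induction sketch does not supply.
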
 

The result implies that the Nash social welfare of $\calS'$ is at least $e^{-1/e}$ times that of $\calS$. This in turn proves that the expected weighted Nash Social Welfare of the solution returned by our rounding algorithm is at least the exponential of the value of the LP solution.  To de-randomize the algorithm, we break the fractional matching between groups and items into a convex combination of polynomial number of integral matchings, and choose the best matching in the combination.

\section{\texorpdfstring{$(e^{1/e} + \epsilon)$}{e to the 1/e + epsilon}-Approximation Using Configuration LP}
At the beginning of the algorithm, we check if there is an allocation with positive weighted Nash Social Welfare, using the bipartite matching algorithm. If not, we terminate the algorithm immediately by returning any allocation. From now on, we assume the optimum weighted Nash Social Welfare value is positive.  We describe the configuration LP in Section~\ref{sec:LP} and the rounding algorithm in Section~\ref{sec:rounding}. The analysis is given in Section~\ref{sec:analysis}.

\subsection{The Configuration LP} \label{sec:LP}

For convenience, for any value function $v: \calG \to \R_{\geq 0}$, we define $v(S):= \sum_{j \in S}v_j$ for every $S \subseteq \calG$ to be the total value of items in $S$ according to the value function $v$. In the integer program corresponding to the configuration LP, for every $i \in \cal A$ and $S \subseteq \calG$, we have a variable $y_{i, S} \in \{0, 1\}$ indicating if the set of items assigned to $i$ is $S$ or not. We relax the integer constraint to obtain the following configuration LP: 
\begin{align}
    \max \sum_{i \in \calA, S \subseteq  \calG} w_i \cdot y_{i, S} \cdot \ln v_i(S) \qquad \text{s.t.} \tag{Conf-LP}\label{CLP}
\end{align} \vspace*{-10pt}
\begin{align}
    \sum_{i \in \calA, S \ni j} y_{i, S} &\leq 1 &\quad &\forall j \in \calG \label{LPC:j-assigned-once}\\
    \sum_{S \subseteq \calG} y_{i, S} &= 1 &\quad &\forall i \in \calA \label{LPC:i-gets-one-set}\\
    y_{i, S} &\geq 0 &\quad &\forall i \in \calA, S\subseteq \calG
\end{align}

It is convenient for us to consider the natural logarithm of the Nash social welfare function as the objective, which is $\sum_{i \in \calA} w_i \cdot \ln v_i (\sigma^{-1}(i))$. This leads to the objective in \eqref{CLP}.  \eqref{LPC:j-assigned-once} requires that every item $j$ is assigned to at most one agent, and \eqref{LPC:i-gets-one-set} requires that every agent $i$ is assigned one set of items. 

The configuration LP has an exponential number of variables, but it can be solved within an additive error of $\ln(1 + \epsilon)$ for any $\epsilon > 0$, in time polynomial in the size of the instance  and $\frac1\epsilon$. 
\begin{theorem}
    \label{thm:solve-lp}
    For any $\epsilon > 0$, there is an algorithm that     
    outputs a valid solution $(y_{i, S} \in \Q_{\geq 0})_{i \in \calA, S \subseteq \calG}$ to \eqref{CLP} whose value is at least the optimum value of the LP minus $\ln(1+\epsilon)$, represented using a list of the non-zero entries. The running time of the algorithm is polynomial in the input size and $\frac1\epsilon$.
\end{theorem}
We defer the proof of Theorem~\ref{thm:solve-lp} to Section~\ref{sec:solving-LP}. 
Notice that we are considering the logarithm of Nash social welfare, and the typical $(1+\epsilon)$-multiplicative factor becomes an additive error of $\ln(1 + \epsilon)$.

\subsection{The Rounding Algorithm} \label{sec:rounding}
    From now on, we assume we have obtained a vector $y$ from solving the LP, described using a list of non-zero coordinates; the value of $y$ to \eqref{CLP} is at least the optimum value minus $\ln(1+\epsilon)$. We can assume \eqref{LPC:j-assigned-once} holds with equalities: $\sum_{i \in \calA, S \ni j}y_{i, S} = 1$ for every $j \in \calG$. Then we let $x_{ij} = \sum_{S \ni j} y_{i, S}$ for every $i \in \calA$ and $j \in \calG$. So $\sum_{i \in \calA} x_{ij} = 1$ for every $j \in \calG$.

    In this paragraph, we fix an agent $i$ and break the fractional items assigned to $i$ into a set $G_i$ of groups, each containing 1 fractional item. They are created in non-increasing order of values, as in the Shmoys-Tardos algorithm for unrelated machine scheduling problems. That is, the first group contains the 1 fractional most valuable items assigned to $i$, the second group contains the 1 fractional most valuable items assigned to $i$ after removing the first group, and so on. Formally, we sort the items in $\calG$ in non-increasing order of $v_{ij}$ values, breaking ties arbitrarily. Let $p_i = \ceil{\sum_{j \in \calG} x_{ij}}$. Then we can find vectors $g^1, g^2, \cdots, g^{p_i} \in [0, 1]^{\calG}$ satisfying the following properties:
    \begin{enumerate}[label = (P\arabic*)]
        \item For every $t \in [1, p_i-1]$, we have $|g^t|_1 = 1$; for $t = p_i$, we have $|g^t|_1 = \sum_{j \in \calG}x_{ij} - (p_i - 1) \in (0, 1]$.
        \item $\sum_{t = 1}^{p_i}g^t_j = x_{ij}$ for every $j \in \calG$.
        \item For every $1 \leq t < t' \leq p_i$, and two items $j, j'$ such that $j$ appears before $j'$ in the ordering, it cannot happen that $g^t_{j'} > 0$ and $g^{t'}_j > 0$.
    \end{enumerate}
    It is easy to see that $g^1, g^2, \cdots, g^{p_i}$ are uniquely determined by the three conditions. We say each $g^t$ is a group. Let $G_i = \{g^1, g^2, \cdots, g^{p_i}\}$ be the set of all groups constructed for this agent $i$. 
    \medskip

    Now we take all agents $i$ into consideration and let $G = \uplus_{i \in \calA}G_i$ be the set of all groups constructed.\footnote{It is possible that two groups from different sets $G_i$ and $G_{i'}$ have the same vector representation. So we treat $G$ as a multi-set and we assume we know which set $G_i$ each group $g \in G$ belongs to.} The representations of groups give a fractional matching between the groups $G$ and items $\calG$: an item $j$ is matched to a group $g \in [0, 1]^{\calG}$ with a fraction of $g_j$.  Then each item is matched to an extent of 1, and every group $g$ is matched to an extent of $|g|_1$. So a group is matched to an extent of 1 if it is not the last group for an agent, and at most 1 otherwise.  
    
    We can efficiently partition the fractional matching between $G$ and $\calG$ into a convex combination of polynomial number of (partial-)matchings.  If we randomly choose a matching from the convex combination, the following property holds. 

    \begin{enumerate}[label = ($\star$)]
        \item For every group $g \in G$ and item $j \in \calG$, we have $\Pr[j\text{ is matched to }g] = g_j$.
    \end{enumerate}

    Each matching in the combination naturally gives us an allocation of items to agents: If an item $j \in \calG$ is matched to some group $g \in G_i$, then we assign $j$ to $i$.  Our algorithm simply outputs the best allocation from all matchings in the combination.

\subsection{The Analysis} \label{sec:analysis}
    It is more convenient to analyze the following randomized rounding algorithm: randomly choose a matching from the convex combination, and output the allocation corresponding to the matching. Clearly, the deterministic algorithm can only perform better.

    Let $S_i$ be the set of items assigned to $i$ by the randomized rounding algorithm. By ($\star$) we know that the probability that $j$ is assigned to $i$ is precisely $x_{ij}$. ($\star$) implies that an item $j \in \calG$ is matched with probability 1. If a group $g$ has $|g|_1 = 1$, then it is matched with probability 1.

    With Theorem~\ref{thm:EF1} on the quality of EF1 allocations in the setting of identical agents, we prove the following key lemma:
    \begin{lemma}
        For every $i \in \calA$, we have
        \begin{align*}
            \E\big[\ln v_i(S_i)\big] \geq \sum_{S \subseteq \calG} y_{i, S}\cdot \ln v_i(S) \ -\ \frac1e.
        \end{align*}
    \end{lemma}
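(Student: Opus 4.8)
\emph{Proof plan.} The idea is to deduce the lemma from Theorem~\ref{thm:EF1}. Fix $i$ and build an auxiliary \emph{unweighted} Nash Social Welfare instance $\mathcal{I}$ with identical additive valuations, consisting of many copies of agent $i$. After the routine preprocessing that discards items worthless to each agent and handles the trivial case of zero optimal welfare separately, we may assume $y_{i,\emptyset}=0$; then $\sum_j x_{ij}\ge 1$, so by (P1) the algorithm's output $S_i$ is nonempty in every outcome and all logarithms below are finite. Now pick an integer $N$ clearing the denominators of all the rationals $y_{i,S}$, $x_{ij}$, $g^t_j$ and of the probabilities of the finitely many outcomes of the randomized matching of Section~\ref{sec:rounding}, and let $\mathcal{I}$ have $N$ agents, each with the additive valuation $v_i$, and, for each $j\in\calG$, $Nx_{ij}$ identical copies of $j$ (each worth $v_{ij}$). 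Write $\mathrm{NSW}(\sigma)$ for the (unweighted) Nash welfare in $\mathcal{I}$, so $\ln\mathrm{NSW}(\sigma)=\frac1N\sum_{a}\ln v_i(\sigma^{-1}(a))$, the sum over the $N$ agents of $\mathcal{I}$.

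Next I build two allocations in $\mathcal{I}$. For the LP allocation $\calS$: for each $S$ with $y_{i,S}>0$, give a bundle consisting of one copy of each $j\in S$ to each of $Ny_{i,S}$ distinct agents; this is feasible since each copy of $j$ is used $\sum_{S\ni j}Ny_{i,S}=Nx_{ij}$ times, and $\ln\mathrm{NSW}(\calS)=\frac1N\sum_S(Ny_{i,S})\ln v_i(S)=\sum_S y_{i,S}\ln v_i(S)$, so $\ln\mathrm{NSW}$ of the optimum of $\mathcal{I}$ is at least $\sum_S y_{i,S}\ln v_i(S)$. For the rounded allocation $\calS'$: in an outcome $r$ of the randomized matching (probability $\pi_r$) each group $g^t\in G_i$ is matched to at most one item $J^r_t$ (undefined only possibly for $t=p_i$) and $i$ receives $S^r:=\{J^r_1,\dots,J^r_{p_i}\}$; give a bundle consisting of one copy of each element of $S^r$ to each of $N\pi_r$ distinct agents. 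This is feasible since each copy of $j$ is used $\sum_{r:\,j\in S^r}N\pi_r=N\Pr[j\in S_i]=Nx_{ij}$ times, and $\ln\mathrm{NSW}(\calS')=\sum_r\pi_r\ln v_i(S^r)=\E[\ln v_i(S_i)]$.

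The crux is that $\calS'$ is EF1. Within any outcome $r$ the items $J^r_1,J^r_2,\dots$ are distinct (each item is matched at most once) and $v_{i,J^r_1}\ge v_{i,J^r_2}\ge\cdots$: since $J^r_t\in\mathrm{supp}(g^t)$ and $J^r_{t-1}\in\mathrm{supp}(g^{t-1})$, property (P3) with the pair $t-1<t$ forbids $J^r_t$ from appearing before $J^r_{t-1}$ in the sorted order, so $v_{i,J^r_t}\le v_{i,J^r_{t-1}}$. The same argument for two distinct outcomes $r,r'$ gives $v_{i,J^{r'}_t}\le v_{i,J^r_{t-1}}$ for all $t\ge 2$. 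Hence, for the agents of $\calS'$ corresponding to outcomes $r$ and $r'$ (with $S^{r'}\neq\emptyset$), deleting the most valuable item $J^{r'}_1$ from the second agent's bundle leaves value $\sum_{t\ge 2}v_{i,J^{r'}_t}\le\sum_{t\ge 2}v_{i,J^r_{t-1}}=\sum_{t=1}^{p_i-1}v_{i,J^r_t}\le v_i(S^r)$, which is exactly the EF1 condition (the case $r=r'$ is trivial). Applying Theorem~\ref{thm:EF1} to $\calS'$ in $\mathcal{I}$ now yields $\E[\ln v_i(S_i)]=\ln\mathrm{NSW}(\calS')\ge\ln\mathrm{NSW}(\mathrm{OPT})-\frac1e\ge\sum_S y_{i,S}\ln v_i(S)-\frac1e$, which is the claim.

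I expect the EF1 verification to be the main obstacle: one must extract from (P3) that an item drawn from group $g^t$ is never more valuable than an item drawn from group $g^{t-1}$, even across different outcomes, and combine this with the shift-by-one estimate $\sum_{t\ge 2}v_{i,J^{r'}_t}\le\sum_{t=1}^{p_i-1}v_{i,J^r_t}$. The remaining ingredients — choosing $N$, checking feasibility of $\calS$ and $\calS'$, the $\ln\mathrm{NSW}$ bookkeeping, and the preprocessing reductions — are routine.
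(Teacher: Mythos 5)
Your proposal is correct and follows essentially the same route as the paper: the same auxiliary unweighted instance with $\Delta$ (your $N$) copies of agent $i$ and $\Delta x_{ij}$ copies of each item, the same two allocations $\calS$ and $\calS'$ induced by $y$ and by the rounding distribution, the same EF1 verification via (P3), and the same invocation of Theorem~\ref{thm:EF1}. The only differences are cosmetic (indexing the agents of $\calS'$ by random outcomes rather than by realized sets, plus a more explicit treatment of the empty-bundle edge case and of the item-by-item comparison that the paper states briefly).
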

    \begin{proof}
        Throughout the proof, we fix the agent $i$. Let $\Delta > 0$ be an integer, so that every $y_{i, S}$ is an integer multiple of $1/\Delta$, and the probability that $S_i = S$ for any $S$ is also an integer multiple of $1/\Delta$.\footnote{We have that all $y_{i, S}$ values are rational numbers. Under this condition, it is  easy to guarantee that the probabilities are rational numbers.} We consider an instance of the unweighted Nash Social Welfare problem with identical additive valuations. In the instance, there are $\Delta$ copies of the agent $i$, and $\Delta x_{ij}$ copies of every item $j \in \calG$; so all the agents are identical. The $y = (y_{i, S})_{S \subseteq \calG}$ vector gives us an allocation $\calS$ to the instance: For every $S \subseteq \calG$, there are exactly $\Delta y_{i,S}$ agents who get a copy of $S$.  Notice that this is a valid solution, as $\sum_{S} y_{i, S} = 1$ and $\sum_{S \ni j} y_{i, S} = x_{ij}$ for every item $j$.
        
        The Nash Social Welfare of the allocation $\calS$ is 
        \begin{align*}
            \left(\prod_{S \subseteq \calG} v_i(S)^{\Delta y_{i, S}}\right)^{1/\Delta}  = \prod_{S \subseteq \calG} v_i(S)^{y_{i, S}}.
        \end{align*}
        
        The distribution for $S_i$ also corresponds to an allocation $\calS'$ of items to agents: For every $S \subseteq \calG$, there are $\Delta \cdot \Pr[S_i = S]$ agents who get a copy of $S$. Again, this is a valid solution as $\sum_{S} \Pr[S_i = S] = 1$ and $\sum_{S \ni j} \Pr[S_i = S] = \E[S_i \ni j] = x_{ij}$.
        
        The Nash Social Welfare of the allocation $\calS'$ is 
        \begin{align*}
            \left(\prod_{S \subseteq \calG} v_i(S)^{\Delta \Pr[S_i = S]}\right)^{1/\Delta} = \prod_{S \subseteq \calG} v_i(S)^{\Pr[S_i = S]}.
        \end{align*}
        
        A crucial property for the solution $\calS'$ is that it is EF1. Indeed, if $\Pr[S_i = S] > 0$ for some $S$, then $S$ contains exactly one item from each group in $G_i$ except for the last one, from which $S$ contains at most one item. Also, the items in the groups $G_i$ are sorted by (P3).  So if there are two sets $S$ and $S'$ in the support of the distribution for $S_i$, and we remove the most valuable item from $S'$, then $S$ beats $S'$ item by item. 
        
        Therefore, by Theorem~\ref{thm:EF1}, we know that the Nash Social Welfare of $\calS'$ is at least $e^{-1/e}$ times that of the optimum allocation for the instance, which is at least that of $\calS$. That is, 
        \begin{align*}
            \prod_{S \subseteq \calG} v_i(S)^{\Pr[S_i = S]} \geq e^{-1/e} \cdot \prod_{S \subseteq \calG} v_i(S)^{y_{i, S}}.
        \end{align*}
        Taking logarithm on both sides gives the lemma. 
    \end{proof}

    Applying the lemma for every $i \in \calA$ and using linearity of expectation, we have 
    \begin{align*}
        \E\left[\sum_{i \in \calA}w_i \cdot \ln v_i(S_i)\right] \geq \sum_{i \in \calA, S \subseteq \calG} w_i \cdot y_{i, S} \cdot \ln v_i(S) - \frac1e.
    \end{align*}
    We used that $\sum_{i \in \calA} w_i = 1$. 
    By the convexity of exponential function, we have 
    \begin{align*}
        \E\left[\prod_{i \in \calA}v_i(S_i)^{w_i}\right] \geq e^{-1/e} \cdot \exp\left(\sum_{i \in \calA, S \subseteq \calG} w_i \cdot y_{i, S} \cdot \ln v_i(S_i)\right) \geq e^{-1/e} \cdot \frac{\text{opt}}{1 + \epsilon},
    \end{align*}
    where $\text{opt}$ is the weighted Nash Social Welfare of the optimum allocation, and the second inequality used that the value of our solution $y$ to \eqref{CLP} is at least its optimum value minus $\ln(1+\epsilon)$.
    By scaling $\epsilon$ down by an absolute constant at the beginning, we can make the right side to be at least $\frac{\text{opt}}{e^{1/e} + \epsilon}$. As we argued, the deterministic algorithm will output an allocation whose weighted Nash Social Welfare is at least $\frac{\text{opt}}{e^{1/e} + \epsilon}$. This finishes the proof of Theorem~\ref{thm:main}.

\section{Solving Configuration LP within an Additive Error of \texorpdfstring{$\ln (1 + \epsilon)$}{log (1 + epsilon)}: Proof of Theorem~\ref{thm:solve-lp}} \label{sec:solving-LP}

    In this section, we prove Theorem~\ref{thm:solve-lp}. By scaling valuation functions, we assume $v_{ij} = 0$ or $v_{ij} \geq 1$ for every $i \in \calA, j \in \calG$; this does not change the instance. Let $\lp$ be the value of the \eqref{CLP}. As we assumed there is an allocation with positive value, we have $\lp \neq -\infty$.

    We assume $\epsilon > 0$ is upper bounded by a sufficiently small constant; otherwise, we take $\epsilon$ to be the constant. We say an algorithm is efficient if its running time is polynomial in the size of the input instance and $\frac1\epsilon$. We assume that we are given a number $o < \lp$. Our goal is to find a valid solution to \eqref{CLP} with rational coordinates, whose value is at least $o - \frac\epsilon 4 - \ln\big(1 + \frac\epsilon 4\big)$. At the end of the section, we show that this is sufficient. \medskip

    We consider the dual of \eqref{CLP}, with the objective replaced by a constraint. 
    \begin{align}
        \sum_{j \in \calG}\alpha_j + \sum_{i \in \calA} \beta_i &\leq o \label{LPC:dual-obj}\\
        \sum_{j \in S} \alpha_j + \beta_i &\geq w_i \cdot \ln v_i(S) &\quad &\forall i \in \calA, S \subseteq \calG \label{LPC: dual-iS}\\
        \alpha_j &\geq 0 &\quad &\forall j \in \calG \label{LPC:dual-alpha} 
    \end{align}
    As we assumed $o < \lp$, the dual LP (\ref{LPC:dual-obj}-\ref{LPC:dual-alpha}) is infeasible. Abusing the terminology slightly, we say a vector $(\alpha \in \R_{\geq 0}^\calG, \beta \in \R^{\calA})$ is an optimum solution to the dual LP if it minimizes the left-side of \eqref{LPC:dual-obj}, subject to \eqref{LPC: dual-iS} and \eqref{LPC:dual-alpha}. By linear programming duality, an optimum solution $(\alpha, \beta)$ has $\sum_{j \in \calG}\alpha_j + \sum_{i \in \calA} \beta_i=\lp$. 

    To check the feasibility of dual LP, we need an approximate separation oracle for \eqref{LPC: dual-iS}. This can be done using dynamic programming: 
    \begin{lemma}
        \label{lemma:oracle}
        Given $(\alpha \in \R_{\geq 0}^\calG, \beta \in \R^\calA)$ that does not satisfy \eqref{LPC: dual-iS}, there is an efficient oracle that finds some $i \in \calA$ and $S' \subseteq \calG$ satisfying 
    \begin{align}
        \sum_{j \in S'} \alpha_j + \beta_i < w_i \cdot \ln \Big(\big(1+\frac\epsilon2\big)v_i(S')\Big). \label{inequ:S'}
    \end{align}
  
    \end{lemma}

    \begin{proof}
        Suppose \eqref{LPC: dual-iS} is not satisfied for $i \in \calA$ and $S \subseteq \calG$. We can guess the agent $i$, and the item $j^* \in S$ with the largest $v_{ij^*}$. Then we discard the items $j$ with $v_{ij} > v_{ij^*}$. Let $\calG'$ be the set of remaining items.  For every $j \in \calG'$, we round down $v_{ij}$ to the nearest integer multiple of $\frac{\epsilon\cdot v_{ij^*}}{2m}$; let $\bar v_{ij}$ be the rounded value. As $\bar v_{ij}$ values are integer multiples of $\frac{\epsilon\cdot v_{ij^*}}{2m}$ not exceeding $v_{ij^*}$, we can afford to guess $\bar V := \bar v_i(S)$ as it has only $O(\frac{m^2}{\epsilon})$ possibilities. 

        We solve the following knapsack covering problem: find a set $S' \subseteq \calG'$ such that $\bar v_i(S') \geq \bar V$ so as to minimize $\sum_{j \in S'} \alpha_j$.  As the $\bar v_{ij}$ values are of the form $\frac{z \cdot \epsilon\cdot v_{ij^*}}{2m}$ with $z \in \Big[0, \frac{2m}{\epsilon}\Big]$ being integers, the problem can be solved efficiently and exactly using dynamic programming. As the set $S$ is a valid solution to the knapsack covering instance, the solution $S'$ returned satisfies $\bar v_i(S') \geq \bar V = \bar v_i(S)$ and $\sum_{j \in S'} \alpha_j \leq \sum_{j \in S} \alpha_j$. So, 
    \begin{align*}
        v_i(S') \geq \bar v_i(S') \geq \bar v_i(S) \geq v_i(S) - m\cdot \frac{\epsilon\cdot v_{ij^*}}{2m} = v_i(S) - \frac{\epsilon\cdot v_{ij^*}}{2} \geq v_i(S) - \frac{\epsilon}{2} \cdot v_i(S').
    \end{align*}
    Therefore $v_i(S) \leq \big(1 + \frac\epsilon2\big) v_i(S')$. Then \eqref{inequ:S'} follows from that \eqref{LPC: dual-iS} is not satisfied for $i, S$.

    We need to guess $i, j^*$ and $\bar V$; there are $O(\frac{nm^3}{\epsilon})$ possibilities for the combination. For a fixed $i, j^*$ and $\bar V$, the dynamic programming runs in $O(\frac{m^3}{\epsilon})$-time. So, overall, the running time of the oracle is $\mathrm{poly}\big(n, m, \frac1\epsilon\big)$.\footnote{We remark that we did not try to optimize the running time.} Finally as \eqref{inequ:S'} can be verified easily for any given $i$ and $S'$, incorrect guesses will not cause any issue. 
    \end{proof}

    To run the ellipsoid method, we need some bounds on $\alpha_j$ and $\beta_i$ values. This is not straightforward as the $\beta_i$ values can be negative.  We define $v_{\max} := \max_{i \in \calA, j \in \calG} v_{ij}$.
    \begin{lemma}
        \label{lemma:bound-alpha-beta}
        There is an optimum solution $(\alpha \in \R_{\geq 0}^\calG, \beta \in \R^{\calA})$ to the dual LP such that $\alpha_j \in [0, \ln (m v_{\max}^2)]$ for every $j \in \calG$ and $|\beta_i| \leq \ln (mv_{\max}^2)$ for every $i \in \calA$.
    \end{lemma}

    \begin{proof}
        We modify constraint~\eqref{LPC: dual-iS} in the dual LP  (\ref{LPC:dual-obj}-\ref{LPC:dual-alpha}), so that we only consider the constraint for pairs $(i \in \calA, S \subseteq \calG)$ satisfying $v_{ij} > 0$ for every $j \in S$. This does not change the dual LP as constraint~\eqref{LPC: dual-iS} for a pair $(i, S)$ is implied by the constraint for the pair $(i, \{j \in S: v_{ij}> 0\})$. Let $\big(\alpha \in \R_{\geq 0}^\calG, \beta \in \R^{\calA}\big)$ an optimum solution to the dual LP with the smallest $\sum_{j \in \calG}\alpha_j$; notice that $\sum_{j \in \calG}\alpha_j + \sum_{i \in \calA} \beta_i = \lp$. Let $\Pi$ be the set of $(i, S)$ pairs for which \eqref{LPC: dual-iS} is tight for the given $(\alpha, \beta)$, after the modification to constraint~\eqref{LPC: dual-iS}.
            
        We argue that if $\beta_i < 0$ and $|S| \geq 2$, then $(i, S) \notin \Pi$. Assume towards the contradiction that some $(i, S) \in \Pi$ has $\beta_i < 0$ and $|S| \geq 2$.  
        
        Every $j \in S$ has $v_{ij} \geq 1$ by the assumption made at the beginning of the section.  We have $\alpha_j + \beta_i\geq w_i\cdot \ln v_{ij}$ for every $j \in S$ as \eqref{LPC: dual-iS} holds for $(i, \{j\})$. Summing up the inequalities over all $j \in S$ gives us $\sum_{j \in S}\alpha_j + |S|\beta_i \geq w_i \cdot \sum_{j \in S}\ln v_{ij} \geq w_i \ln v_i(S)$. Therefore $\sum_{j \in S}\alpha_j + \beta_i > w_i \ln v_i(S)$ as $|S|\geq 2$ and $\beta_i < 0$, a contradiction.
    
        Let $\calA'$ be the set of agents $i$ with $\beta_i < 0$. Let $E = \big\{(i, j): i \in \calA', (i, \{j\}) \in \Pi\big\}$. We prove that there is a matching in the bipartite graph $H:= (\calA' \cup \calG, E)$ covering all agents $\calA'$. Otherwise, there is a set $\calA'' \subseteq \calA'$ of agents with $|N_H(\calA'')| < |\calA''|$, where $N_H(\calA'')$ is the set of neighbors of $\calA''$ in $H$. Then consider the following operation: decreasing $\beta_i$ for every $i \in \calA''$ by $\delta$ and increasing $\alpha_j$ for every $j \in N_H(\calA'')$ by $\delta$, for a small enough $\delta > 0$. This decreases $\sum_{j \in \calG}\alpha_j + \sum_{i \in \calA}\beta_i$ as $|C \cap \calG|\geq |C \cap \calA|$ without violating \eqref{LPC: dual-iS} and \eqref{LPC:dual-alpha}, contradicting that $(\alpha, \beta)$ is optimum. 

        Focus on any connected component $C$ of $H$. We prove that some $j \in C \cap \calG$ has $\alpha_j \leq \ln(m v_{\max})$. Assume this is not the case; that is, every $j \in C \cap \calG$ has $\alpha_j > \ln (m v_{\max})$. Consider the following operation: decreasing $\alpha_j$ by $\delta$ for every $j \in C \cap \calG$ and increasing $\beta_i$ by $\delta$ for every $i \in C \cap \calA$, for a small enough $\delta > 0$. This operation does not break constraint~\eqref{LPC: dual-iS} for any $i$ with $\beta_i \geq 0$, as the right side of \eqref{LPC: dual-iS} is at most $\ln (mv_{\max})$. It does not break the constraint for any $i$ with $\beta_i < 0$ as $\delta$ is sufficiently small.  Moreover, the operation decreases $\sum_{j \in \calG}\alpha_j$ without increasing $\sum_{j \in \calG}\alpha_j + \sum_{i \in \calA}\beta_i$ as $|C \cap \calG|\geq |C \cap \calA|$, contradicting the way we choose $(\alpha, \beta)$. 
        
        Therefore, every connected component $C$ of $H$ contains some $j \in C \cap \calG$ with $\alpha_j \leq \ln(m v_{\max})$. Recall that every $(i, j) \in E$ has $\beta_i < 0$ and $\beta_i + \alpha_j = w_i \cdot \ln v_{ij}$. This is equivalent to $\alpha_j  = -\beta_i + w_i \cdot \ln v_{ij}$. There is a simple path in $H$ connecting every vertex $\calA' \cup \calG$ to an item $j$ with $\alpha_j \leq \ln (mv_{\max})$. So, $\alpha_j$ for every $j \in \calG$ (respectively, $-\beta_i$ for every $i \in \calA'$) is at most $\ln (mv_{\max}) + \sum_{i \in \calA'} w_i \ln v_{\max} \leq \ln (mv_{\max}) + \ln v_{\max} = \ln(m v_{\max}^2)$. Finally, for $i \in \calA \setminus \calA'$, we have $\beta_i \geq 0$. Clearly we have $\beta_i \leq w_i\ln (mv_{\max}) \leq \ln (mv_{\max})$ since other decreasing $\beta_i$ to $w_i\ln (mv_{\max})$ will not break any constraint, violating the optimality of $(\alpha, \beta)$. 
    \end{proof}

    With the lemma, we start the ellipsoid method with the smallest ellipsoid containing the cuboid $[0, mv^2_{\max}]^{\calG} \times [-mv^2_{\max}, mv^2_{\max}]^{\calA}$.
    We terminate the procedure when the volume of the ellipsoid is less than that of the cuboid $\Big[0, \frac{\epsilon}{4(n+m)}\Big]^{n + m}$. The algorithm terminates in $\mathrm{poly}\big(n, m, \log \frac 1\epsilon, \log v_{\max}\big)$ iterations and is thus efficient.\medskip
    
    We claim the following LP is infeasible:
    \begin{align}
        \sum_{j \in \calG}\alpha_j + \sum_{i \in \calA} \beta_i &\leq o - \frac\epsilon 4 \label{LPC:dual-obj-1}\\
        \sum_{j \in S'} \alpha_j + \beta_i &\geq w_i \cdot \ln \Big(\big(1+\frac\epsilon2\big)v_i(S')\Big) &\quad &\forall (i, S') \text{ used by the ellipsoid method}\label{LPC: dual-iS-1}\\
        \alpha_j &\geq 0 &\quad &\forall j \in \calG \label{LPC:dual-alpha-1} 
    \end{align}
   
    Otherwise, assume some $(\alpha, \beta)$ is a feasible solution to the above LP. Then the cuboid $\Big\{(\alpha, \beta) + \delta : \delta \in \big[0, \frac{\epsilon}{4(n+m)}\big]^{\calG \cup \calA}\Big\}$ would be still contained in the final ellipsoid, contradicting our termination condition.

    We consider the weighted Nash Social Welfare instance where all $v_{ij}$ values are scaled up by $1+\frac\epsilon 2$, and \eqref{CLP} to the instance. By solving the LP restricted to the variables $y_{i, S'}$ corresponding to constraints \eqref{LPC: dual-iS-1} (that is, we fix the other variables to $0$), we obtain a solution $y$ whose value is at least $o - \frac\epsilon 4$ w.r.t the scaled instance. The LP solver runs in polynomial time as the number of variables $y_{i, S'}$ now is polynomial.  So, the value of the solution $y$ to \eqref{CLP} w.r.t the original instance is at least $o - \frac\epsilon 4 - \sum_{i \in \calA, S \subseteq \calG} y_{i, S}w_i\ln\big(1+\frac\epsilon 4\big) = o - \sum_{i} w_i\ln\big(1+\frac\epsilon 4\big) = o - \frac\epsilon 4- \ln\big(1+\frac\epsilon 4\big)$. Moreover, if $y$ is a vertex solution, then it is rational as all the coefficients in the LP constraints are in $\{0, 1\}$.

    Finally, we show how to handle the assumption that we are given the parameter $o$. By only allowing every agent to get one item, we can obtain an $m$-approximation for the weighted Nash Social Welfare instance. Therefore, we can construct a list of $O\Big(\frac{\log m}{\epsilon}\Big)$ values of $o$, such that for some value $o$ in the list, we have $\lp \in (o, o + \frac\epsilon 4]$. For every $o$ in the list, we run the above algorithm and return the best solution constructed. In particular, for the $o$ with $\lp \in (o, o + \frac\epsilon 4]$, the solution has value at least $o - \frac\epsilon 4 - \ln\big(1+\frac\epsilon 4\big) \geq \lp - \frac\epsilon 4 - \frac\epsilon 4 - \ln\big(1+\frac\epsilon 4\big) \geq \lp - \ln(1+\epsilon)$ for a sufficiently small $\epsilon$. Therefore, the best solution returned has value at least $\lp - \ln(1+\epsilon)$; this finishes the proof of Theorem~\ref{thm:solve-lp}.


\printbibliography

\end{document}